\newtheorem{thm}{Theorem}
\newtheorem{prop}[thm]{Proposition}
\newtheorem{defn}[thm]{Definition}
\newtheorem{rem}{Remark}
\begin{document}

\mainmatter  

\title{Indices of Power in Optimal IDS Default Configuration: Theory and Examples}

\titlerunning{Optimal Default IDS Configuration}

%
%
\author{Quanyan Zhu%
\and Tamer Ba\c{s}ar\thanks{This work was supported in part by the  U.S. Air Force Office of Scientific Research (AFOSR) under grant number  AFOSR MURI FA9550-09-1-0249, and in part by Boeing Company through the Information Trust Institute of the University of Illinois.}}
\authorrunning{Q. Zhu and T. Ba\c{s}ar}

\institute{Coordinated Science Laboratory and \\
Department of Electrical and Computer Engineering, \\
University of Illinois at Urbana Champaign,\\
1308 W. Main St., Urbana, IL, USA, 61801,\\
Email: \{zhu31, basar1\}@illinois.edu}

%
%
%
\maketitle

\begin{abstract}
 Intrusion Detection Systems (IDSs) are becoming essential to protecting modern information infrastructures. The effectiveness of an IDS is directly related to the computational resources at its disposal. However, it is difficult to guarantee especially with an increasing demand of network capacity and rapid proliferation of attacks.
On the other hand, modern intrusions often come as sequences of attacks to reach  some predefined goals. It is therefore critical to identify the best default IDS configuration  to attain the highest possible overall protection within a given resource budget.
This paper proposes a game theory based solution to the problem of optimal signature-based IDS configuration under resource  constraints. We apply the concepts of indices of power, namely, Shapley value and Banzhaf-Coleman index, from cooperative game theory to quantify the influence or contribution of libraries in an IDS with respect to given attack graphs.  Such valuations take  into consideration the knowledge on common attack graphs  and experienced system attacks and are used to  configure an IDS optimally at its default state by solving a knapsack optimization problem. 
\keywords{Intrusion Detection Systems, IDS Configuration, Cooperative Games, Shapley Value, Banzhaf-Coleman Index.}
\end{abstract}

\section{Introduction}
\label{sec:Introduction}

The issue of optimal IDS configuration and provisioning has always been difficult to deal with, mainly due to the overwhelming number of parameters to tune. 
IDSs are generally shipped with a number of attack detection libraries (also known as categories \cite{roesch1999lisa} or analyzers \cite{paxson1998bsd}) with a considerable set of configuration parameters. The current version of the Snort IDS  \cite{roesch1999lisa}, for example, has approximately 10,000 signature rules located in fifty categories. Each IDS also comes with a \emph{default configuration} to use when no additional information or expertise is available. It is not trivial to determine the optimal configuration of an IDS  because it is essential to understand the quantitative relationship between the wide range of analyzers and tuning parameters. This explains the reason why current IDSs are configured and tuned simply based on a trial-and-error approach.
Although there have been recent approaches, such as in \cite{sinha06raid, vasiliadis08raid, schear08raid}, to optimize IDS resource consumption, we still need to deal with resource constraints and make the best use of an IDS with available resource budgets. On the other hand, most of current computer attacks do not come in  one shot but in several steps, by which attackers can acquire an increasing amount of knowledge and privileges to attack the target system. To describe such multi-stage behaviors, \emph{attack graphs or trees} are commonly used as tools to model security vulnerabilities of a system and all possible sequences of exploits used by intruders.

In this paper, we develop a novel game theory based solution to the problem of optimal default signature-based  IDS configuration under resource limitations. The solution considers   the costs and functionalities of libraries and defender's knowledge on common attack graphs to configure an IDS optimally at its default state.

The contribution of this paper can be summarized as follows. We introduce the concept of \emph{detectability} of an attack sequence with respect to a given set of IDS libraries and devise metrics to measure the detectability and the efficacy of detection. From a game theoretical perspective, we view a configuration as a coalition among libraries and apply the indices of power, namely, {\it Shapley value} and {\it Banzhaf-Coleman index}, to rank the overall importance of a library for the purpose of intrusion detection, which can be used in a knapsack problem for finding the optimal default configuration. In addition, we extend our results to general attack graphs based on multilinear extension and propose a scheme to approximate the indices of power when the number of libraries is large.

The rest of the paper is organized as follows. In the next section, we summarize some recent related work on IDS configuration and cooperative games. In Section \ref{sec:AttackerAndDetector}, we define the important notion of detectability and establish a mathematical model for attackers and detectors. In Section \ref{sec:model}, we formulate a cooperative game framework to evaluate the indices of power for a given attack sequence. In Section \ref{MLE}, we introduce multilinear extension as a general framework and an approximation technique to evaluate the indices of power. Finally, in Section \ref{sec:Conclusion}, we conclude the paper.

%
%
%


\section{Related Work}
\label{sec:RelatedWork}


%
We find a recent growing literature on performance characterization of IDSs in the computer science community. Some of the related work is summarized as follows.

\subsection{IDS Performance Evaluation}
Gaffney et al. in \cite{gaffney01sp} use a decision analysis that integrates and extends Receiver Operating Characteristics (ROCs) to provide an expected cost metric. They demonstrate that the optimal operation point of an IDS depends not only on the system's own ROC curve and quantities such as  the expected rate of false positives, false negatives, and the cost of operation, but also on the degree of hostility of an environment in which the IDS is situated, such as the probability and the type of an intrusion. Hence, the performance evaluation of an IDS has to take into account both the defender's side and attacker's side. 

In \cite{ZTB}, a network security configuration
problem is studied. A nonzero-sum stochastic game is formulated to
capture the interactions among distributed intrusion detection systems in the network
as well as their interactions against exogenous intruders. 
The authors have proposed the notion of security capacity as the
largest achievable payoff to an agent at an equilibrium to yield
performance limits on the network security, and a mathematical programming approach is used to characterize
the equilibrium as well as the feasibility of a given security
target.

Zhu and Ba\c{s}ar in \cite{zhubasar} use a zero-sum stochastic game to capture the dynamic behavior of the defender and the attacker. The transition between different system states depends on the actions taken by the attacker and the defender.  The action of the defender at a given time instant is to choose a set of libraries as its configuration, whereas the action of the attacker is to choose an attack from a set of possible ones. The change of configuration from one instant to the next implies for the defender to either load new libraries or features to the configuration or unload part of the current ones. The actions taken by the attacker at different times constitute a sequence of attacks used by the attacker. An online Q-learning algorithm is used to learn the optimal defense response strategies for the defender based on the samples of outcomes from the game.

In this paper, we address the issues of optimal default configuration, which is complementary to the one addressed in  \cite{zhubasar}. We find an optimal configuration which can serve as an initial or starting profile for dynamic IDS configuration.

To identify important factors for the performance of an IDS is another crucial investigation.
In \cite{schaelicke03raid}, Schaelicke et al. observe several architectural and system parameters that contribute to the effectiveness of an IDS, such as operating system structure, main memory bandwidth and latency as well as the processor micro-architecture. Memory bandwidth and latency are identified as the most significant contributors to sustainable throughput. CPU power is important as well; however, it has been overlooked in the experiments due to the existence of other closely related architectural parameters, such as deep pipelining, level of parallelism, and caching.
%
%

In \cite{dreg08raid}, the authors investigate the prediction of resource consumption based on traffic profile. An interesting result, which we assume to be available in this paper, is that both CPU and memory usage can be predicted with a model linear in the number of connections. Equally important is the confirmation that the factoring of IDS resource usage with per-analyzer and per-connection scaling is a reasonable assumption.
The authors use this finding to build an analyzer selection and configuration tool that estimates resource consumption per analyzer to determine whether a given configuration is feasible or not. The constraint used is a target CPU load below which the load should remain for a predefined percentage of time.
The actual selection of a feasible analyzer set is however left as a manual task for the IDS operator. 
In our work, we propose a more informed and automated way of IDS configuration decision that takes into account the resource utilization per IDS library as well as the expected intrusion context based on experienced attack sequences or graphs.

\subsection{Attack Graphs}
\label{sec:AttackGraphs}

The generation of attack graphs has received considerable attention in the literature 
\cite{swiler01discex, sheyner02sp, jajodia07tan, wang06comcom, lip05tr, Sheyner04fmco, mehta2006raid}.
Sheyner et al. present in \cite{sheyner02sp, Sheyner04fmco} a tool for automatically generating attack graphs and performing different kinds of formal vulnerability analysis on them.
%
%
%
Attack graphs have also been used in intrusion containment. Foo et al. develop in \cite{foo05dsn} the ADEPTS intrusion containment system in the context of E-commerce environments. The system builds a graph of intrusion goals, localizes intrusions, and deploys responses at the appropriate services to allow the system to work with minimum overall performance degradation. The system takes into consideration the financial impact of an attack and derives response actions that go beyond the simple deactivation or isolation of the infected service/host by considering interaction effects among multiple components of the protected environment.
%
%
%
Finally, in \cite{noel08jnsm}, attack graphs are used to derive optimal IDS placement in a network so as to minimize intrusion risk. The authors developed the TVA tool (Topological Vulnerability Analysis), which can be used to model a network and populate it with information regarding vulnerabilities. The tool is claimed to have the ability to avoid state-space explosion through attack graph reduction.
In this paper, we assume that such knowledge of attack graphs is given or has been acquired previously through experience.

\subsection{Game-theoretical Methods}

Game-theoretical methods appear to be an appropriate framework that connects the performance evaluation of an IDS with the attack sequences or graphs on the intruder's side. The concepts in cooperative game theory become natural to study the contribution of each IDS component to the attack sequence, especially when we view a configuration as a coalition among IDS libraries.

Cooperative game theory studies the outcome of a game when coalitions are allowed among multiple players. The concepts of the core and stable sets are regarded as solutions to $N-$person cooperative games. However, the lack of general existence theorem has led game theorists to look for other solution concepts. Currently, indices of power such as {\it Shapley value}, {\it Banzhaf-Coleman index} of power and their multilinear extensions have been widely used in a variety of literature involving resource allocation and estimation of power in a group of decision-making agents. In \cite{Owen95}, examples are given on the application of indices of power in the analysis of presidential election games with a quantitative conclusion that voters in some states are assuming more power than voters in other states in the election. In  \cite{BKZ06}, Shapley value is used to allocate profit in a multi-retailer and a single supplier cooperative game where players can form inventory-pooling coalition. In \cite{GK08},  an efficient measurement allocation in unattended ground sensor networks is suggested based on Shapley values. It is shown that by allocating measurements proportional to the Shapley value, the observability of localizing a target increases. A similar approach was also found to allocate unit start-up costs among electricity consumers, load and retailers. 

It appears that there has been very little work on using game theoretical methods to study IDS configurations. Similar to the problems involving resource allocations and presidential elections, cooperative game theory lends itself naturally also to studying the relations among libraries in an adversarial environment.

\section{Attacker and Detector Model}
\label{sec:AttackerAndDetector}

We let $\mathcal{L}=\{l_1,l_2,\cdots,l_N\}$ denote the set of a finite number of libraries and $\mathcal{L}^*$ denote the set of all the possible subsets of $\mathcal{L}$, with cardinality $|\mathcal{L}^*|=2^N$. We let $F_i \in \mathcal{L}^*, i\in\{1, 2,\cdots,2^N\}$ be a   configuration set of libraries, which is a subset of $\mathcal{L}$. Each library has a cost associated with it, i.e., there is a mapping function $\mathcal{C}:\mathcal{L}\rightarrow \mathbb{R}_+$ that determines the cost of each library $c_i=\mathcal{C}(l_i)$. Assuming the cost of each library is independent of the others \cite{dreg08raid}, we define the cost of a configuration $F_i$ by $C_{F_i}=\mathcal{C}^*(F_i)=\sum_{x\in F_i} \mathcal{C}(x)$, where $\mathcal{C}^*: \mathcal{L}^*\rightarrow \mathbb{R}_+$ is a mapping function of configuration cost.

The attacker, on the other hand, has different types of attacks $a_i$. Let $a_i\in\mathcal{A}$ be a specific action of attack and $\mathcal{A}=\{a_1, a_2,\cdots,a_M \}$ be the set of possible attacks. We define a sequence of attacks $S_i$ to be a tuple of elements of $\mathcal{A}$, and $\mathcal{A}^*$ be the set of all possible sequences of attacks. The order of the elements in  $S_i$ indicates a sequential strategy of an intrusion. Every attack $a_i\in\mathcal{A}$ incurs a damage $d_i$, given by the mapping function $\mathcal{D}:\mathcal{A}\rightarrow \mathbb{R}_+$, i.e., $d_i=\mathcal{D}(a_i), \forall a_i\in\mathcal{A}$. Assuming that the damage caused by a sequence of attacks does not depend on the order of the sequence and the damage by one attack is independent of other attacks, we define the damage caused by an attack sequence $S_i\in \mathcal{A}^*$ by $D_{S_i}=\mathcal{D}^*(S_i)=\sum_{x\in S_i}\mathcal{D}(x)$.

Each library $l_i$ can only effectively detect certain attacks. We define the set $P_{l_i}\subset \mathcal{A}$ as its scope of detection. A library $l_i$ is capable of detecting an attack $a_i$ if and only if $a_i\in P_{l_i}$, otherwise the library $l_i$ is sure to fail to detect. The definition of detectability of a library configuration follows from the scope of detection of each library.

Without losing generality, we can further assume that $\bigcap_{l_i\in\mathcal{L}}P_{l_i}=\emptyset$ because we can always define libraries to have functions that do not overlap with each other. This is particularly true in practice with signature-based libraries.

\begin{defn}\label{detectability}
An attack sequence $S_i$ is detectable by a library configuration $F_i$ if  $S_i \subseteq T_i,$ where $T_i:=\cup_{l_k\in F_i}P_{l_k}$. An attack sequence $S_i$ is undetectable by $F_i$  if $S_i \subseteq \overline{T}_i$, where $\overline{T}_i:=\mathcal{A}\setminus T_i$.
\end{defn}

Based on Definition \ref{detectability}, we can separate an attack sequence $S_i$ into two separate subsequences $S_i^\circ$ and $S_i^\star$, where $S_i^\circ$ is undetectable and $S_i^\star$ is  detectable. These two sequences satisfy the properties that they are mutually exclusive, i.e., $S_i^\circ\cup S_i^\star=S_i$ and $S_i^\circ\cap S_i^\star=\emptyset$.

An example is given in Fig. \ref{libraryNSequence}, where we have a sequence of five attacks $a_1, a_2, a_3, a_4$ and $a_5$. Detection library $l_1$ can be used to monitor $a_1$ and $a_2$ effectively whereas libraries $l_2$ and $l_3$ can only detect $a_3$ and $a_4$ respectively. However, $a_5$ is alien to the detection system and no library can be used to detect $a_5$. An IDS will rely on the successful detection of earlier known  attack stages to prevent the last unknown one.

\begin{figure}
\begin{center}
  \centerline{
  \includegraphics[width=3in]{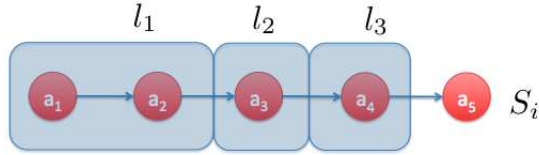}}
  \caption{A library configuration $F_i$ that consists of libraries $l_1, l_2, l_3$ is used to detect an attack sequence $a_1 \rightarrow a_5$.}\label{libraryNSequence}
\end{center}
\end{figure}

The definition of detectability assumes that each library can detect a certain signature or anomaly-based attack with success once it is loaded. However, due to many practical reasons such as delay and mutations of attacks, we can only successfully detect with some true positive (TP) rate.  We use $\alpha^P_{ij}$ to denote the probability of successful detection of an attack $a_i\in\mathcal{A}$ using library $l_{j}$ and by definition $\alpha^P_{ij}=0$ for $i$ not in $P_{l_j}$.
The probability of undetected attacks when attacks occur, or the false negative (FN) rate,  is thus given by $\alpha^N_{ij}=1-\alpha_{ij}$.

We also provide a metric that measures the detectability of an attack sequence $S_i$ with respect to a configuration $F_j$, and the efficiency of detection for $S_i$. 

\begin{defn}
Let function $v:\mathcal{A}^*\rightarrow \mathbb{R}$ be a value function defined on attacker's set of sequences, satisfying 
\begin{equation}\label{axiom1}
v(A_1\cup A_2)\leq v(A_1)+v(A_2),
\end{equation}
where $A_1, A_2\in\mathcal{A}^*$. Given a library component $l_j$, its coverage $P_j$, and an attack sequence $S_i$, we define detection effectiveness, $\eta_{ij}\in [0,1]$, as follows:
\begin{equation}
\label{etaD}\eta_{ij}:=\frac{v(S_i\cap P_j)}{v(S_i)}.
\end{equation}
\end{defn}
\begin{rem}
One simple choice of $v$ is the cardinality of a set, i.e., $v(S_i)=\textrm{card}(S_i)$. We can also have more complicated value functions; for example, we may have more weights on particular important attacks or final attacks.
\end{rem}

Given a configuration $F$ which consists of a finite number of libraries, we use definition in (\ref{detectability}) to define detectability of an IDS configuration with respect to an attack sequence $S_i$ as follows.
\begin{equation}\label{detectability}
\eta_i:=\sum_{l_k\in F}\eta_{ik}=\frac{v(S_i\cap T)}{v(S_i)}=\frac{\sum_{l_k\in F}v(S_i\cap P_k)}{v(S_i)}.
\end{equation}
Using the concepts of TP/FN, we can weight our definition of detectability in (\ref{detectability}) by true positive rate $\alpha^P_{ij}$. Thus we have the definition of weighted detectability as follows.

\begin{defn}
Given a configuration $F$ and attack sequence $S_i$, $\alpha^P$-weighted detectability  is defined as 
\begin{equation}
\eta_i^\alpha=\sum_{l_k\in F}\alpha_{ik}^P\eta_{ik}.
\end{equation}
\end{defn}

The notion of detectability  shows the effectiveness of detection configuration $F_j$ with respect to attack sequence $S_i$.  We will later  use detectability as a metric to optimize the performance of an IDS since higher detectability yields better detection results. 

On the other hand, we can also describe efficiency of a detection using value function $v$. 
\begin{defn}
Given an attack sequence $S_i$ and configuration $F$, we let $\zeta_i$ describe the efficiency of detection
\begin{equation}\label{zeta}
\zeta_{i}=\frac{v(S_i\cap T)}{v(T)}=\sum_{l_k\in F}\frac{v(S_i\cap P_k)}{v(T)},
\end{equation}
and let $\zeta_{i}^\alpha$ denote the weighted detection efficiency given by 
\begin{equation}\label{zetaw}
\zeta_i^\alpha=\sum_{l_k\in F}\alpha_{ik}^P\frac{v(S_i\cap P_k)}{v(T)}.
\end{equation}
where $T$ is a coverage of configuration $F$.
\end{defn}

\begin{prop}
With (\ref{axiom1}) and the metrics $\eta_{i}$ and $\zeta_{i}$ defined in (\ref{detectability}) and (\ref{zeta}) respectively, we have the following relation between these two metrics:
\begin{equation}\label{RELetaDetaE}
\frac{1}{\eta_i}+\frac{1}{\zeta_{i}} \leq 1, \forall i \in \mathcal{A}^*.
\end{equation} 
\end{prop}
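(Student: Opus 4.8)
The plan is to clear $\eta_i$ and $\zeta_i$ in favour of the value function $v$, reduce (\ref{RELetaDetaE}) to a single inequality among $v(S_i)$, $v(T)$ and $v(S_i\cap T)$, and then close with the subadditivity axiom (\ref{axiom1}). First I would substitute the closed forms (\ref{detectability}) and (\ref{zeta}) for $\eta_i$ and $\zeta_i$, noting that the two metrics share the common numerator $v(S_i\cap T)=\sum_{l_k\in F}v(S_i\cap P_k)$. Taking reciprocals gives $1/\eta_i=v(S_i)/v(S_i\cap T)$ and $1/\zeta_i=v(T)/v(S_i\cap T)$, so that
\begin{equation}
\frac{1}{\eta_i}+\frac{1}{\zeta_i}=\frac{v(S_i)+v(T)}{v(S_i\cap T)} .
\end{equation}
Hence the proposition is equivalent to an inequality relating $v(S_i)+v(T)$ to $v(S_i\cap T)$, and the rest of the argument is devoted to controlling this ratio.

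For the bound I would use the detectable/undetectable decomposition $S_i=S_i^{\star}\cup S_i^{\circ}$ introduced above, a disjoint union with $S_i^{\star}=S_i\cap T$, and likewise $T=(S_i\cap T)\cup(T\setminus S_i)$. Applying (\ref{axiom1}) to these two unions, and using the additivity of $v$ over the disjoint per-library slices $S_i\cap P_{l_k}$ that make up $S_i\cap T$ — legitimate under the non-overlap hypothesis $\bigcap_{l_i\in\mathcal{L}}P_{l_i}=\emptyset$ and already recorded as an equality in (\ref{detectability}) — one compares $v(S_i)$ and $v(T)$ with $v(S_i\cap T)$ while keeping the common term $v(S_i\cap T)$ explicit on both sides. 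Substituting the resulting estimate back into the displayed identity then yields (\ref{RELetaDetaE}).

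The step I expect to be the main obstacle is fixing the \emph{orientation} of the subadditivity. The naive move is to bound $v(S_i\cup T)$ above by $v(S_i)+v(T)$ and then compare $v(S_i\cup T)$ with $v(S_i\cap T)$; but (\ref{axiom1}) constrains the union, not the intersection, so this leaves the inequality pointing the wrong way. The way around it is to work from the summation forms $\eta_i=\sum_{l_k\in F}\eta_{ik}$ and $\zeta_i=\sum_{l_k\in F}v(S_i\cap P_k)/v(T)$ and to lean on the pairwise disjointness of the coverages $P_{l_k}$: each library contributes exactly once to each metric, so the two denominators are genuinely $v(S_i\cap T)$, and the per-library terms can be matched up across the two sums. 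Once the bookkeeping across libraries is organised this way, (\ref{RELetaDetaE}) follows from (\ref{axiom1}) by a short computation, and I would not anticipate any further difficulty beyond this point.
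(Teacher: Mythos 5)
Your opening computation coincides with the paper's: both of you reduce the claim to the identity $\frac{1}{\eta_i}+\frac{1}{\zeta_i}=\frac{v(S_i)+v(T)}{v(S_i\cap T)}$, and up to that point everything is fine. The problem is everything after it. You correctly diagnose the obstacle --- (\ref{axiom1}) bounds $v(S_i\cup T)$ from \emph{above} by $v(S_i)+v(T)$, so it cannot be used to push $\frac{v(S_i)+v(T)}{v(S_i\cap T)}$ \emph{down} toward $\frac{v(S_i\cup T)}{v(S_i\cap T)}$, let alone down to $1$ --- but your proposed escape route (``organise the bookkeeping across libraries using disjointness of the $P_{l_k}$'') is not an argument. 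The disjointness of the coverages only re-expresses the numerator $v(S_i\cap T)$ as the sum $\sum_{l_k\in F}v(S_i\cap P_k)$, which is already built into (\ref{detectability}) and (\ref{zeta}); it gives no leverage whatsoever on the ratio $\frac{v(S_i)+v(T)}{v(S_i\cap T)}$, and no concrete inequality is ever derived. The ``short computation'' you defer to does not exist.

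Indeed it cannot exist, because the stated inequality is false for the paper's own suggested choice of value function: with $v=\mathrm{card}$ one has $\eta_i=\frac{|S_i\cap T|}{|S_i|}\leq 1$ and $\zeta_i=\frac{|S_i\cap T|}{|T|}\leq 1$, hence $\frac{1}{\eta_i}+\frac{1}{\zeta_i}\geq 2$. For what it is worth, the paper's own proof commits exactly the error you flagged as ``the naive move'': it writes $\frac{v(S_i)+v(T)}{v(S_i\cap T)}\leq\frac{v(S_i\cup T)}{v(S_i\cap T)}$, which is (\ref{axiom1}) with the inequality reversed, and then asserts $\frac{v(S_i\cup T)}{v(S_i\cap T)}\leq 1$, which fails for any monotone $v$ whenever $S_i\neq T$. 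What subadditivity (together with monotonicity of $v$) actually gives is the opposite relation, $\frac{1}{\eta_i}+\frac{1}{\zeta_i}\geq\frac{v(S_i\cup T)}{v(S_i\cap T)}\geq 1$. So the constructive way out is not more bookkeeping but a sign correction in (\ref{RELetaDetaE}); with the inequality reversed, your first display plus (\ref{axiom1}) and $v(S_i\cap T)\leq v(S_i\cup T)$ closes the proof in two lines.
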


\begin{proof}
Using the definitions in (\ref{detectability}), (\ref{zeta}) and (\ref{axiom1}), we arrive at
\begin{eqnarray}
\frac{1}{\eta_{i}}+\frac{1}{\zeta_{i}}&=&\frac{v(S_i)+v(T)}{v(S_i\cap T)}\leq\frac{v(S_i\cup T)}{v(S_i\cap T)}\leq 1.
\end{eqnarray}
\end{proof}

The inequality (\ref{RELetaDetaE}) provides a fundamental tradeoff relationship between detectability and efficiency.

\section{Cooperative Game Model}
\label{sec:model}
\label{model}

In this section, we review essential concepts of indices of power and use them in the context of optimal default IDS configuration. We can view each possible configuration as a coalition of different libraries and hence each library can be associated with an index of power, signifying its contribution to the detection of intrusions.

We introduce the notion of $\omega$-effective detection to quantity the goal of intrusion detection.

\begin{defn}
We call a configuration $F$ of an IDS $\omega-$effective for attack sequence $S_i$ if the detectability  does not fall below $\omega$, that is $\eta_i\geq \omega$, and $(\omega,\alpha)-$effective if  the weighted detectability  does not fall below $\omega$, that is $\eta_i^\alpha\geq\omega$.
\end{defn}

Parameter $\omega$ is a level of intrusion detection performance an IDS wants to achieve. We call a configuration goal achieving if it is  $(\omega,\alpha)$-effective, and unsatisfactory otherwise.

\subsection{Shapley Values and B-C index}

To describe an $N$-person cooperative game using game-theoretical language, we let $\mathcal{L}$ be the set of the players, and any subset of $\mathcal{L}$, or a configuration $F\in\mathcal{L}^*$, be a coalition. We let $f: \mathcal{L}^*\rightarrow \{0, 1\}$ be a characteristic function of the game having basic properties that 
\begin{enumerate}[(P1)]
\item $f(\emptyset)=0$; 
\item $f(F_1\cup F_2)\geq f(F_1)+f(F_2)$, $F_1, F_2\in\mathcal{L}^*$ and $F_1\cap F_2=\emptyset$;
\item $f(\{i\})=0$, for all $i\in\mathcal{L}$.
\item $f(\mathcal{L})=1$. 
\end{enumerate}
By having the characteristic function $f$ taking values $0$ and $1$ only, we have defined a $simple$ game.  The value $1$ from the mapping $f$ of a coalition or configuration means a winning or goal achieving library configuration, whereas the value $0$ yields a non-winning or unsatisfactory library configuration.

A carrier of the cooperative game is a library which does not contribute to any configurations to attain the goal of detection. Mathematically, a carrier is a coalition, $F_c\in\mathcal{L}^*$, such that $f(F)=f(F\cap F_c)$, for all $F$. We can always remove from our library list those dummy libraries  which do not contribute or disregard them in our cooperative game when they are found to be carriers.

The Shapley value of the $i$-th library $l_i$ is given by $\phi_i$, for all $l_i\in\mathcal{L}$. 
\begin{equation}\label{shapley}
\phi_i=\sum_{R\subset  \mathcal{L}}\frac{(r-1)!(N-r)!}{N!}\left[f(R)-f(R-\{l_i\})\right]
\end{equation}
The Shapley value $\phi_i$ given in (\ref{shapley}) evaluates the contribution of each library toward achieving $\omega-$effective detection. Since the characteristic mapping $f$ only takes value in 0 and 1, Shapley value can be further simplified into 
\begin{equation}\label{shapley2}
\phi_i=\sum_{R'\subset\mathcal{L}}\frac{(r-1)!(N-r)!}{N!},
\end{equation}
where, for a given $l_i$,  $R'$ is the winning coalition such that the configuration can achieve $\omega-$effectiveness with $l_i\in R'$, whereas $R'-\{l_i\}$ fails to achieve the goal. With a smaller scale problem, Shapley value is relatively easy to compute. However, in large problems, the evaluation of the weights can create computational overhead and the complexity increases exponentially with the size of the library. An easier index of power to compute is the Banzhaf-Coleman index of power, or B-C index, which depends on counting the number of swings, i.e., number of coalitions or configurations that wins when $l_i$ is included but loses when is not.

\begin{defn}
(B-C Index, \cite{Owen95}) The normalized Banzhaf-Coleman index $\beta_i, \forall l_i\in\mathcal{L}$ is given by 
\begin{equation}
\beta_i=\frac{\theta_i}{\sum_{j=1}^N\theta_j},
\end{equation}
where $\theta_i$ is the number of swings for $l_i$; a swing for  $l_i\in\mathcal{L}$ is a set $R\subset\mathcal{L}$ such that $R$ is a goal-achieving configuration if $l_i\in R$, and $R-\{l_i\}$ is not.
\end{defn}

Shapley value and B-C index are closely related. They can both be evaluated by multilinear extension (see Section \ref{MLE}). The difference lies in the weighting coefficients used. In Shapley value, the weights are varied according to the coalition size, whereas in the B-C index, the weights are all equal.

\subsection{An Example}\label{example1}

Suppose we are given an attack sequence $S_i$ as depicted in Fig. \ref{libraryNSequence}, where we have five attack actions ordered by $a_1 \rightarrow a _2 \rightarrow a_3 \rightarrow a_4 \rightarrow a_5$. There are 3 libraries and the sets $P_{l_i}, i=1,2,3$ are given as follows: $P_{l_1}=\{l_1, l_2\}$,    $P_{l_2}=\{l_3\}$,  $P_{l_3}=\{l_4\}$. It is obvious that the sequence $S_i$ can only be partially detected as $a_5$ is alien to the existing libraries of the IDS system. Suppose that each library has TP rates equal to 1 and $v$ is the cardinality of the set.  We can use Shapley value and B-C index to  quantify  the contribution to the detection of the sequence $S_i$. Let $\omega=3/5$. The set of swings for $l_1$, $l_2$ and $l_3$ are $\{(l_1,l_2),(l_1,l_3), (l_1,l_2,l_3)\},\{(l_1,l_2)\}$, and $\{(l_1,l_3)\}$, respectively. The Shapley values are thus given by $\phi_1=\frac{1}{3}$, $\phi_2=\frac{1}{6}$, and $\phi_3=\frac{1}{6}$; and the B-C indices are thus $\beta_1=\frac{3}{5}$, $\beta_2=\frac{1}{5}$, and $\beta_3=\frac{1}{5}$.
To achieve $\omega =\frac{3}{5}$ level of detection, $l_1$ is most important and $l_2$ and $l_3$ are equally important. Therefore, in terms of the priority of loading libraries, $l_1$ should be placed first and then one should consider $l_2$ and $l_3$. Such evaluation via Shapley value and BC-index is useful for IDS system to assess the influence of each library and make decisions on which libraries to load initially when cost constraints are present.

 \section{Multiple Attack Sequences and Multinear Extension}\label{MLE}
 
 In section \ref{model}, we introduced a cooperative game and proposed the concept of $\omega-$effectiveness to determine the winning or losing coalitions for a given known sequence. In this section, we extend this framework to deal with multiple cooperative games with respect to different sequences in an attack graph. We look at multilinear extensions in this section for two reasons: one is that it can be used to approximate the Shapley value when the library size grows; and the other reason is that it is a general framework that can yield B-C index.
 
 \subsection{Multilinear Extension (MLE)}
 A multilinear extension is a continuous function that can be used to evaluate Shapley value and B-C index as special cases. We let each library $l_i\in\mathcal{L}$ be associated with a continuous variable $x_i\in [0,1]$, and $f_j$ be a characteristic function for detecting particular attack sequence $S_j\in\mathcal{A}^*$. We now introduce the  multilinear extension for an attack sequence $S_j$, denoted by $h_j$, as follows:
  
\begin{defn}
The multilinear extension of the cooperative game with characteristic function $f_j$ is a function $h_j: [0,1]^N\rightarrow \mathbb{R}_{++}$ given by
 \begin{equation}\label{MLEdefn}
 h_j(x_1, x_2, \cdots, x_N)=\sum_{R\subset \mathcal{L}} \left\{ \prod_{l_i\in R} x_i \prod_{l_i\in \mathcal{L}-R}(1-x_i) \right\}f_j(R).
 \end{equation}
 \end{defn}
 
 The function $h_j$ can be used to evaluate Shapley's value by (\ref{MLEsh}) below and B-C index by (\ref{MLEBC}) below.
\begin{equation}\label{MLEsh}
\phi_{ij}=\int_0^1\frac{\partial h_j(x_1, x_2, \cdots, x_N)}{\partial x_i}\Bigg|_{x_1=t,x_2=t,\cdots, x_N=t} dt,
\end{equation}
\begin{equation}\label{MLEBC}
\beta_{ij}=\frac{\partial h_j(x_1, x_2, \cdots, x_N)}{\partial x_i}\Bigg|_{x_1=\frac{1}{2},x_2=\frac{1}{2},\cdots, x_N=\frac{1}{2}},
\end{equation}
where $\phi_{ij}$ and $\beta_{ij}$ are  the Shapley value and B-C index of library $l_i$ for detecting sequence $S_j$, respectively. The set $\mathcal{M}$ is a subset of $\mathcal{A}^*$ that models a set of attacks known to detectors.

To aggregate the effect of a library of detecting a set of sequences $\mathcal{M}\subset\mathcal{A}^*$, we define an aggregated MLE $\bar{h}$ as a sum of MLEs over all the sequences, as follows
\begin{equation}\label{AggBC}
\bar{h}=\sum_{S_j\in\mathcal{M}} p_{j} h_{j},
\end{equation}
where $p_j$ is a weight on $h_j$, indicating the frequency of occurrence of the attack sequence $S_i$. It is a normalized parameter that satisfies $p_j\geq0 $ and $\sum_{S_j\in\mathcal{M}}p_j=1.$

\begin{prop}
The Shapley value $\phi_i$ for detecting multiple sequences is given by
\begin{equation}\label{phiShapley}
\phi_i=\sum_{S_j\in\mathcal{M}}p_j\phi_{ij},
\end{equation}
and B-C index $\beta_i$ for detecting multiple sequences is given by
\begin{equation}\label{phiBC}
\beta_i=\sum_{S_j\in\mathcal{M}}p_j\beta_{ij}=\sum_{S_j\in\mathcal{M}}\left(\frac{p_j\theta_{ij}}{\sum_{k=1}^N\theta_{kj}}\right),
\end{equation}
where $\theta_{ij}$ is the number swings for detecting sequence $S_j$.
\end{prop}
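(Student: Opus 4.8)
The plan is to exploit the fact that both operations that read off an index of power from a multilinear extension — integrating the diagonal partial derivative in (\ref{MLEsh}) for the Shapley value, and evaluating the partial derivative at the center of the cube in (\ref{MLEBC}) for the B-C index — are \emph{linear} in the extension. Since the aggregated extension $\bar h$ of (\ref{AggBC}) is, by construction, the convex combination $\sum_{S_j\in\mathcal{M}} p_j h_j$, applying these linear operations to $\bar h$ and pushing the finite weighted sum through gives exactly the claimed weighted averages. So I would simply \emph{define} $\phi_i$ and $\beta_i$ for the multi-sequence game by (\ref{MLEsh})--(\ref{MLEBC}) applied to $\bar h$, and then compute.

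Concretely, the steps are as follows. First, from (\ref{AggBC}) and linearity of partial differentiation,
\begin{equation}
\frac{\partial \bar h(x_1,\dots,x_N)}{\partial x_i}=\sum_{S_j\in\mathcal{M}} p_j\,\frac{\partial h_j(x_1,\dots,x_N)}{\partial x_i}.
\end{equation}
Restricting both sides to the diagonal $x_1=\cdots=x_N=t$ and integrating over $t\in[0,1]$, and interchanging the (finite) sum with the integral, yields
\begin{equation}
\phi_i=\int_0^1\frac{\partial \bar h}{\partial x_i}\bigg|_{x_k=t}dt=\sum_{S_j\in\mathcal{M}}p_j\int_0^1\frac{\partial h_j}{\partial x_i}\bigg|_{x_k=t}dt=\sum_{S_j\in\mathcal{M}}p_j\,\phi_{ij},
\end{equation}
which is (\ref{phiShapley}). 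Second, evaluating $\partial\bar h/\partial x_i$ at $x_1=\cdots=x_N=\tfrac12$ and again using linearity gives $\beta_i=\sum_{S_j\in\mathcal{M}}p_j\,\beta_{ij}$, the first equality of (\ref{phiBC}); substituting the swing-count form $\beta_{ij}=\theta_{ij}/\sum_{k=1}^N\theta_{kj}$ from the definition of the normalized B-C index of the single-sequence game for $S_j$ then produces the second equality of (\ref{phiBC}).

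I do not expect a serious obstacle; the one point that needs a remark is that (\ref{MLEsh})--(\ref{MLEBC}) are legitimately applicable to $\bar h$. Each $h_j$ in (\ref{MLEdefn}) is multilinear (affine in every $x_i$ separately), and a convex combination of multilinear functions is multilinear, so $\bar h$ is again a multilinear extension and the Owen formulas carry over verbatim — even though $\bar h$ is no longer the MLE of a $\{0,1\}$-valued simple game but of a general weighted game, which is immaterial for these formulas. Indeed, the real content of the proposition is precisely that aggregating at the level of MLEs via (\ref{AggBC}) is the ``correct'' bookkeeping, since one cannot aggregate the $\{0,1\}$-valued characteristic functions $f_j$ directly and stay within simple games; the proof above shows this bookkeeping reproduces the intuitively expected $p_j$-weighted averages of the per-sequence indices. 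The only other thing to keep straight is the normalization convention, so that the raw derivative-at-$\tfrac12$ quantity and the swing-ratio expression for $\beta_{ij}$ refer to the same normalized index.
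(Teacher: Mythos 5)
Your proposal is correct and is essentially the paper's own argument: the paper's entire proof reads ``The proposition can be proved using the linearity of MLE,'' and your write-up is precisely that linearity computation carried out explicitly (push the weighted sum through the partial derivative, the diagonal restriction, and the integral/evaluation at $\tfrac12$). Your closing remark about reconciling the derivative-at-$\tfrac12$ value with the normalized swing ratio is a real normalization issue the paper glosses over, but it does not change the approach.
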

\begin{proof}
The proposition can be proved using the linearity of MLE.
\end{proof} 

\subsection{Multilinear Approximation}
As is well known, the multilinear extension in (\ref{MLEdefn}) has a probabilistic interpretation and it can be used to approximate the indices of power when the number of libraries grows large. We can view the variable $x_i\in [0,1]$ as the probability of a library $l_i$ in a random coalition $\mathcal{S}\subset \mathcal{L}$ such that $f_j(\mathcal{S})=1$ when $l_i\in \mathcal{S}$, and $f_j(\mathcal{S})=0$ otherwise. Since the event that a library is in a random coalition is independent from the event of other libraries in a coalition, we have that the probability of forming the random coalition $\mathcal{S}$ as a particular coalition $R$ is given by
$\mathbb{P}(\mathcal{S}=R)=\prod_{l_i\in R} x_i\prod_{l_i\in \mathcal{L}-R}(1-x_i)$. The definition in (\ref{MLEdefn}) can be interpreted as the expectation of $f(\mathcal{S})$, i.e., $h_j(x_1, x_2,\cdots, x_N)=\mathbb{E}(f_j(\mathcal{S}))$.

Let $Z_j$ be a random variable such that 
\begin{equation}\label{Zrand}
Z_j=\left\{
\begin{array}{ll}
\sum_{i\in\mathcal{M}}\eta_{ij}=:\eta_j^S, &\textrm{~if~} l_j\in\mathcal{S}\\
0 & \textrm{~if~} l_j\in\mathcal{L}-\mathcal{S} 
\end{array}\right.
\end{equation}
and let $Y$ be another random variable defined by $Y=\sum_{l_j\in\mathcal{S}}\eta_j^S=\sum_{l_j\in \mathcal{S}, j\neq i} Z_j$. Since $Z_j$'s are independent, $Y$ has the mean and variance
\begin{equation}\label{Ymean}
\mu(Y)=\sum_{j\neq i,l_j\in S}\eta_j^S x_j,
\end{equation}
\begin{equation}\label{Yvar}
\sigma^2(Y)=\sum_{j\neq i, l_j\in S} \eta_{j}^Sx_j(1-x_j).
\end{equation}
Hence, ${h}_i(x_1,\cdots, x_N)$ is the probability that a coalition wins to be $\omega$-effective with respect to a set of sequences $\mathcal{M}$ but loses if $l_i$ is removed from the coalition. From the definition of $\omega$-effectiveness, we can express ${h}_i$ as
\begin{equation}
{h}_i(x_1,x_2, \cdots, x_N)=\mathbb{P}(\omega \leq Y\leq \omega+\eta_i^S)
\end{equation}
When the size of the library grows, the random variable $Y$ can be approximated by a normal random variable $\bar{Y}$, with mean and variance given in (\ref{Ymean}) and (\ref{Yvar}). Hence,
\begin{equation}\label{prox}
{h}_i(x_1,x_2, \cdots, x_N)=\mathbb{P}\left(\omega-\frac{1}{2} \leq \bar{Y}\leq \omega+\eta_i^S-\frac{1}{2}\right).
\end{equation}

The Shapley value can thus be  computed from (\ref{MLEsh}) and (\ref{AggBC}) by
${h}_i(t,t,\cdots,t)$ with the random variable $\bar{Y}$ having the mean and variance $\mu(\bar{Y})=t\sum_{j\neq i}\eta_j^S$ and $\sigma^2(\bar{Y})=t(1-t)\sum_{j\neq i} \eta_{j}^S$, respectively. The B-C value can be approximated by evaluating ${h}_i(t,t,\cdots,t)$ at $t=\frac{1}{2}$ using (\ref{prox}).

\subsection{Optimal Default Configuration}

The indices of power rank the importance of each library from high to low. We can make use of these indices to make a decision on which libraries to load when the system is subject to some cost constraint $C_0$. Toward that end, we arrive at an integer programming problem as follows:
\begin{eqnarray}\label{optConfig}
\max_{\mathbf{z}} & \sum_{l_i\in\mathcal{L}}z_i\phi_i\\
\nonumber \textrm{s.t.}&  \sum_{l_i\in\mathcal{L}}z_ic_i\leq C_0\\
\nonumber & z_i\in\{0,1\}, \forall l_i\in\mathcal{L}
\end{eqnarray}

We can use the B-C index as well in the objective function. The optimization problem (\ref{optConfig}) can be viewed as a knapsack problem \cite{MS90}. The knapsack problem is well-known to be NP-complete. However, there is a pseudo-polynomial time algorithm using dynamic programming and a fully polynomial-time approximation scheme, which invokes the pseduo-polynomial time algorithm as a subroutine.

\subsection{An Example}
\label{sec:Example}

In this section, we continue with the example in Section \ref{example1}, but with an extended attack tree depicted in Fig. \ref{attackTree}. The libraries that can be used for detection are $l_1,l_2,l_3$ and $l_4$ whose coverages are $P_{l_1}=\{a_1,a_2\},P_{l_2}=\{a_3,a_7\},P_{l_3}=\{a_4,a_8\}$ and $P_{l_4}=\{a_6\}$, respectively. There are 4 known attack sequences in the attack tree. Let $S_1$ be the sequence $a_1\rightarrow a_2 \rightarrow a_3 \rightarrow a_4 \rightarrow a_5$; $S_2$ denote $a_1 \rightarrow a_2 \rightarrow a_6$; $S_3$ denote $ a_1 \rightarrow a_2  \rightarrow a_3  \rightarrow a_7$; and $S_4$ be $ a_1 \rightarrow a_2  \rightarrow a_3  \rightarrow a_4  \rightarrow a_8$.

\begin{figure}[here]
\begin{center}
  \center
  \includegraphics[width=3in]{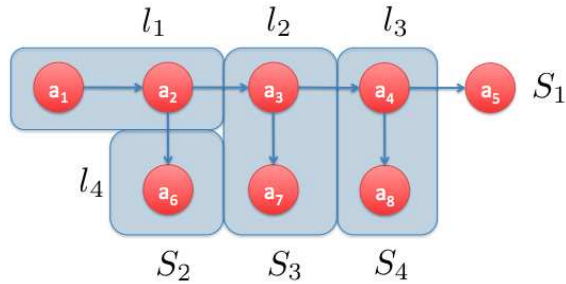}\\
  \caption{Attack tree with attacks $a_j, j=1, 2, \cdots, 8,$ and libraries $l_i, i=1, 2, 3, 4,$ that are used to detect attacks.}\label{attackTree}
\end{center}
\end{figure}

The Shapley values and B-C indices are summarized in the Table 1 and Table 2, respectively. Suppose each library is equally expensive with 1 unit per library. With the capacity constraint being 2 units, we can load library $l_1$ and $l_2$ to optimize the default library. This choice is intuitively plausible because $l_1$ and $l_2$ covers the major routes in the attack tree. $l_4$ does not contribute to the result of detection as much as other libraries, and when $\omega = \frac{3}{5}$, the impact of $l_4$ becomes negligible when $l_1$ is used. When the size of the tree grows, we need to evaluate indices of power in an automated fashion and use a polynomial-time algorithm to find the solution to the knapsack problem (\ref{optConfig}).

\begin{table}[htdp]
\begin{center}
\begin{tabular}{|c || c | c | c | c |}
\toprule
\ \ Sequences \ \ &$\ \ \ l_1\ \ \ $ &$\ \ \ l_2\ \ \ $  &$\ \ \ l_3\ \ \ $  & $\ \ \ l_4\ \ \ $  \\
\midrule
$S_1$&$ \frac{1}{3}$  &$ \frac{1}{6}$   &$ \frac{1}{3} $  & 0 \\
\hline
$S_2$&1 &0  &0  & 0 \\
\hline
$S_3$&$ \frac{1}{2}$  &$ \frac{1}{2}$   &0  & 0 \\
\hline
$S_4$&$ \frac{1}{3}$  &$ \frac{1}{3}$   &$ \frac{1}{3}$   & 0 \\
\bottomrule
\end{tabular}
\end{center}
\label{ATSV1}
\caption{Shapley value for the attack tree}
\end{table}%

{\Huge
\begin{table}[htdp]
\begin{center}
\begin{tabular}{|c||c|c|c|c|}
\toprule
\ \ Sequences \ \ &$\ \ \ l_1\ \ \ $ &$\ \ \ l_2\ \ \ $  &$\ \ \ l_3\ \ \ $  & $\ \ \ l_4\ \ \ $  \\
\midrule
$S_1$&$\frac{3}{5}$ &$\frac{1}{5}$  &$\frac{1}{5}$  & 0 \\
\hline
$ S_2$ &1 &0  &0  & 0 \\
\hline
$ S_3$ &$ \frac{1}{2}$  &$ \frac{1}{2} $  &0  & 0 \\
\hline
$ S_4$ &$ \frac{1}{3}$  &$ \frac{1}{3}$   &$ \frac{1}{3}$   & 0 \\
\bottomrule
\end{tabular}
\end{center}
\label{ATBC2}
\caption{B-C index for the attack tree}
\end{table}}%

\section{Conclusion}
\label{sec:Conclusion}

In this paper, we have adopted a cooperative game approach to study the influence of each library when forming a configuration or a coalition to detect intrusions according to some known attack graphs. We have used the game approach to connect the detector and the attacker, and developed novel notions of detectability and efficacy of detection. The paper has described the applications of Shapley value and B-C index in a combinatorial knapsack optimization problem, which gives rise to an optimal configuration under the resource and cost constraints. The multilinear extension offers a technique to generalize the two indices discussed in the paper and, in addition, offers an approach to estimate these values when the number of libraries grows.

\section*{Acknowledgement}
The first author acknowledges the fruitful discussions with members of the security research group at the University of Waterloo, in particular, Dr. I. Aib and Prof. R. Boutaba.
\bibliographystyle{splncs03}
\bibliography{bibfile}
\end{document}